\newtheorem{theorem}{Theorem}
\theoremstyle{definition}
\newtheorem{ddd}{Definition}
\theoremstyle{definition}
\newtheorem{Example}{Example}
\theoremstyle{definition}
\title{Decentralized Coded Caching in Wireless Networks: Trade-off between Storage and Latency}
\begin{document}

\author{\thanks{This work was supported in part by the European Union’s Horizon $2020$ research and innovation programme under the Marie Skłodowska-Curie grant agreement No $690893$.}
\IEEEauthorblockN{Antonious M. Girgis\IEEEauthorrefmark{1},
        Ozgur Ercetin\IEEEauthorrefmark{2}, Mohammed Nafie\IEEEauthorrefmark{1}\IEEEauthorrefmark{3}, and Tamer ElBatt\IEEEauthorrefmark{1}\IEEEauthorrefmark{3}
        }
    \IEEEauthorblockA{\IEEEauthorrefmark{1}Wireless Intelligent Networks Center (WINC), Nile University, Cairo, Egypt\\
    \IEEEauthorrefmark{2} Faculty of Engineering and Natural Sciences, Sabanci University, Istanbul, Turkey.\\
        \IEEEauthorrefmark{3} Electronics and Communications Dept., Faculty of Engineering, Cairo University, Giza, Egypt\\
        Email: \{a.mamdouh@nu.edu.eg, oercetin@sabanciuniv.edu, mnafie@nu.edu.eg, telbatt@ieee.org\}
    }
\vspace{-1cm}}
\maketitle
\begin{abstract}
This paper studies the decentralized coded caching for a Fog Radio Access Network (F-RAN), whereby two edge-nodes (ENs) connected to a cloud server via fronthaul links with limited capacity are serving the requests of $K_r$ users. We consider all ENs and users are equipped with caches. A decentralized content placement is proposed to independently store contents at each network node during the off-peak hours. After that, we design a coded delivery scheme in order to deliver the user demands during the peak-hours under the objective of minimizing the \textit{normalized delivery time} (NDT), which refers to the worst case delivery latency. An information-theoretic lower bound on the minimum NDT is derived for arbitrary number of ENs and users. We evaluate numerically the performance of the decentralized scheme. Additionally, we prove the approximate optimality of the decentralized scheme for a special case when the caches are only available at the ENs.   
\end{abstract}

\section{INTRODUCTION}
Caching the most popular contents close to user terminals is a promising solution to deal with the tremendous growth of the mobile traffic in the fifth generation (5G) wireless networks. A caching system comprises of two separate phases. The first phase is the placement phase that occurs during the off-peak hours when the resources are under-utilized. During this phase, the network nodes fill their caches with the popular contents. The second phase is the delivery phase that occurs during the peak hours when the network is congested, and hence, the caches at the network nodes can be exploited to partly serve the user requests. In~\cite{maddah2014fundamental}, Maddah-Ali and Niesen have proposed a novel centralized coded caching scheme for an error-free broadcast channel. The authors showed that the network congestion can be significantly reduced by jointly designing the placement and delivery to generate coded multicast transmissions in the delivery phase. However, the centralized scheme in~\cite{maddah2014fundamental} relies on the knowledge of the number of active users in the delivery phase to design the placement phase. This limits the applicability of the centralized scheme, since the placement and the delivery phases might take place at different times. Moreover, the users in the wireless networks are characterized by their mobility in which the number of users varies in the network over time. To cope with this problem, the work in~\cite{maddah2014fundamental} is extended in~\cite{maddah2015decentralized} to present the decentralized coded caching. Recently, the concept of coded caching is studied in interference networks~\cite{xu2016fundamental,hachem2016degrees,maddah2015cache,sengupta2016cloud,tandon2016cloud}. It is shown in~\cite{maddah2015cache} that caches at the transmitters can improve the sum degrees of freedom (DoF) of the interference channel by allowing cooperation between transmitters for interference mitigation. In~\cite{xu2016fundamental}, the authors studied the benefits of caching at both the transmitters and receivers in the interference channel while focusing on the case of three receivers. The extension for an arbitrary number of receivers was studied in~\cite{hachem2016degrees}. The authors in~\cite{sengupta2016cloud,tandon2016cloud} proposed a centralized coded caching scheme for a Fog Radio Access Network (F-RAN) with caches equipped at edge-nodes (ENs) to minimize a novel metric named normalized delivery time (NDT) which measures the worst case delivery latency with respect to an interference-free baseline system in the high signal-to-noise ratio (SNR) regime.  

In contrast to prior works~\cite{xu2016fundamental,hachem2016degrees,maddah2015cache,sengupta2016cloud,tandon2016cloud} that discussed the centralized content placement, we study the decentralized caching problem for an F-RAN architecture with caches equipped at both ENs and users. We first formulate the problem for arbitrary number of ENs and users. Then, we propose the decentralized content placement in which each node stores a fraction of each file randomly and independently from other nodes. Based on the decentralized content placement, we design a coded delivery scheme that exploits the network topology to minimize the NDT for the two ENs and arbitrary number of users case. We show that the performance of the decentralized scheme is within a constant factor of the information-theoretic optimum for the case in which only ENs have caches. The performance of the decentralized scheme is evaluated numerically for the general scenario.

\section{SYSTEM MODEL}

We consider a Fog Radio Access Network (F-RAN) as depicted in Fig~\ref{fig1} comprising a cloud server that has a library of $N$ files, $\mathcal{W}\triangleq \left\{W_1,\cdots,W_N\right\}$, each of size $F$ bits. A set of $K_t$ edge-nodes, $\text{EN}_1,\cdots,\text{EN}_{K_t}$, is ready to serve the requests of $K_r$ users through a $K_t\times K_r$ Gaussian wireless interference channel with channel coefficient $h_{ki}\in \mathbb{C}$ between $\text{EN}_i$ and the $k$th user. The channels $\mathbf{H}\triangleq \left[h_{ki}\right]$, $i\in\left[K_t\right]$, $k\in\left[K_r\right]$, are independent and identically distributed according to a continuous distribution. The cloud is connected to each EN via a fronthaul link of capacity $C_F$ bits per symbol, where the symbol refers to a channel use of the downlink wireless channel. Each $\text{EN}_i$, $i\in\left[K_t\right]$, is equipped with a cache memory $V_i$ of size $M_tF$ bits, while each user, $k\in\left[K_r\right]$, is equipped with a cache memory $Z_i$ of size $M_rF$ bits. We refer to $\mu_t=M_t/N$ and $\mu_r=M_r/N$ as the normalized cache size at each EN and at each user, respectively, where $\mu_t,\mu_r\in\left[0,1\right]$.\footnote{Notations: we use $\oplus$ to denote the bitwise XOR operation. $\left[0,1\right]$ refers to the real numbers between $0$ and $1$. $\left[M\right]$ denotes the set of integers $\lbrace 1,\cdots,M\rbrace$, and $\left[a:b\right]$ denotes the set of integers $\lbrace a,\cdots,b\rbrace$ for $a\leq b$.}

The system operates in two phases, namely \textit{placement phase} and \textit{delivery phase}. The placement phase takes place when the traffic load is low, and hence, the network resources can be utilized to fill the cache memories of each network node as a function of the content library $\mathcal{W}$ without any prior knowledge of the number of active users in the next phase, the future user demands, and channel coefficients. In the delivery phase, the $k$th user requests a file $W_{d_k}$ among the $N$ files available at the cloud, where we consider $\mathbf{d}=\left[d_1,\cdots,d_{K_r}\right]\in \left[N\right]^{K_r}$ as the demand vector. The cloud replies to user demands by sending a message $\mathbf{U}_i=\left(U_i\left(t\right)\right)_{t=1}^{T_F}$ of block length $T_F$ to $\text{EN}_i$ via the fronthaul link, where each fronthaul message $\left(\mathbf{U}_i\right)_1^{K_t}$ cannot exceed $T_F C_F$ bits due to the fronthaul capacity limitations. Each $\text{EN}_i$, $i\in\left[K_t\right]$, has an encoding function mapping the cache contents $V_i$, the fronthaul message $\mathbf{U}_i$, the demand vector $\mathbf{d}$, and the channels $\mathbf{H}$ to a message $\mathbf{X}_i\triangleq\left(X_i\left(t\right)\right)_{t=1}^{T_E}$ of block length $T_E$. $\text{EN}_i$ transmits message $\mathbf{X}_i$ over the interference channel to $K_r$ users under an average transmit power constraint $\frac{1}{T_E} \|\mathbf{X}_i \|^2 \leq P$. Each user $k$ implements a decoding function to estimate the requested file $\hat{W}_{d_k}$ from the cache contents $Z_k$ and the received message $\mathbf{Y}_k\triangleq \left(Y_k\left(t\right)\right)_{t=1}^{T_E}$.
\begin{equation}
Y_k\left(t\right)=\sum_{i=1}^{K_t}h_{ki} X_i\left(t\right)+n_k\left(t\right),\qquad t\in \left[T_E\right]
\end{equation}
where $Y_k\left(t\right)\in\mathbb{C}$ is the received signal of the $k$th user at time $t$, and $n_k\left(t\right)\sim \mathcal{CN}\left(0,1\right)$ denotes the complex Gaussian noise at the $k$th user. For a given coding scheme (caching, cloud encoding, EN encoding, and user decoding functions), the transmission rate of the system is defined as $K_rF/T$, where $T$ is the end-to-end latency. We say that a coding scheme is feasible, if and only if, we have
\begin{equation*}
\max_{\mathbf{d}\in\left[N\right]^{K_r}} \max_{k\in\left[K_r\right]}\mathbb{P}\left(\hat{W}_{d_k}\neq W_{d_k}\right)\rightarrow 0\ \text{as}\ F\rightarrow \infty
\end{equation*}
which is the worst-case probability of error over all possible demands $\mathbf{d}$ and over all users. Our goal in this paper is to design a decentralized coding scheme to minimize the end-to-end latency $T$ for delivering user demands $\mathbf{d}\in\left[N\right]^{K_r}$ in the delivery phase. In the following, we define the normalized delivery time (NDT) first discussed in~\cite{tandon2016cloud} as a performance metric for any coding scheme.
\begin{ddd}
The normalized delivery time (NDT) for any feasible coding scheme with a given normalized cache size $\mu_t$, $\mu_r$, and fronthaul capacity $C_F=r\log\left(P\right)$, is defined as
\begin{equation}
\delta\left(\mu_t,\mu_r,r\right)=\lim_{P\rightarrow \infty}\lim_{F\rightarrow \infty}\sup \frac{\underset{\mathbf{d}}{\max}\ \mathbb{E}_{\mathbf{H}}\left(T\right)}{F/\log\left(P\right)}
\end{equation}
where $\mathbb{E}$ denotes the expectation with respect to the channel matrix $\mathbf{H}$, and $r$ measures the multiplexing gain of the fronthaul links. Moreover, we define the minimum NDT for a given tuple $\left(\mu_t,\mu_r,r\right)$, as
\begin{equation}~\label{NDT}
\delta^{*}\left(\mu_t,\mu_r,r\right)\triangleq \inf\left\{\delta\left(\mu_t,\mu_r,r\right):\delta\left(\mu_t,\mu_r,r\right)\ \text{is feasible}\right\}
\end{equation}
\vspace{-5mm}
\end{ddd}
The NDT $\delta\left(\mu_t,\mu_r,r\right)$ represents the worst case delay to serve any possible user demands $\mathbf{d}\in\left[N\right]^{K_r}$ normalized with respect to an interference-free baseline system of transmission rate $\log\left(P\right)$ in the high SNR regime. 

\begin{figure} [t]
\centering
\includegraphics[scale=0.2]{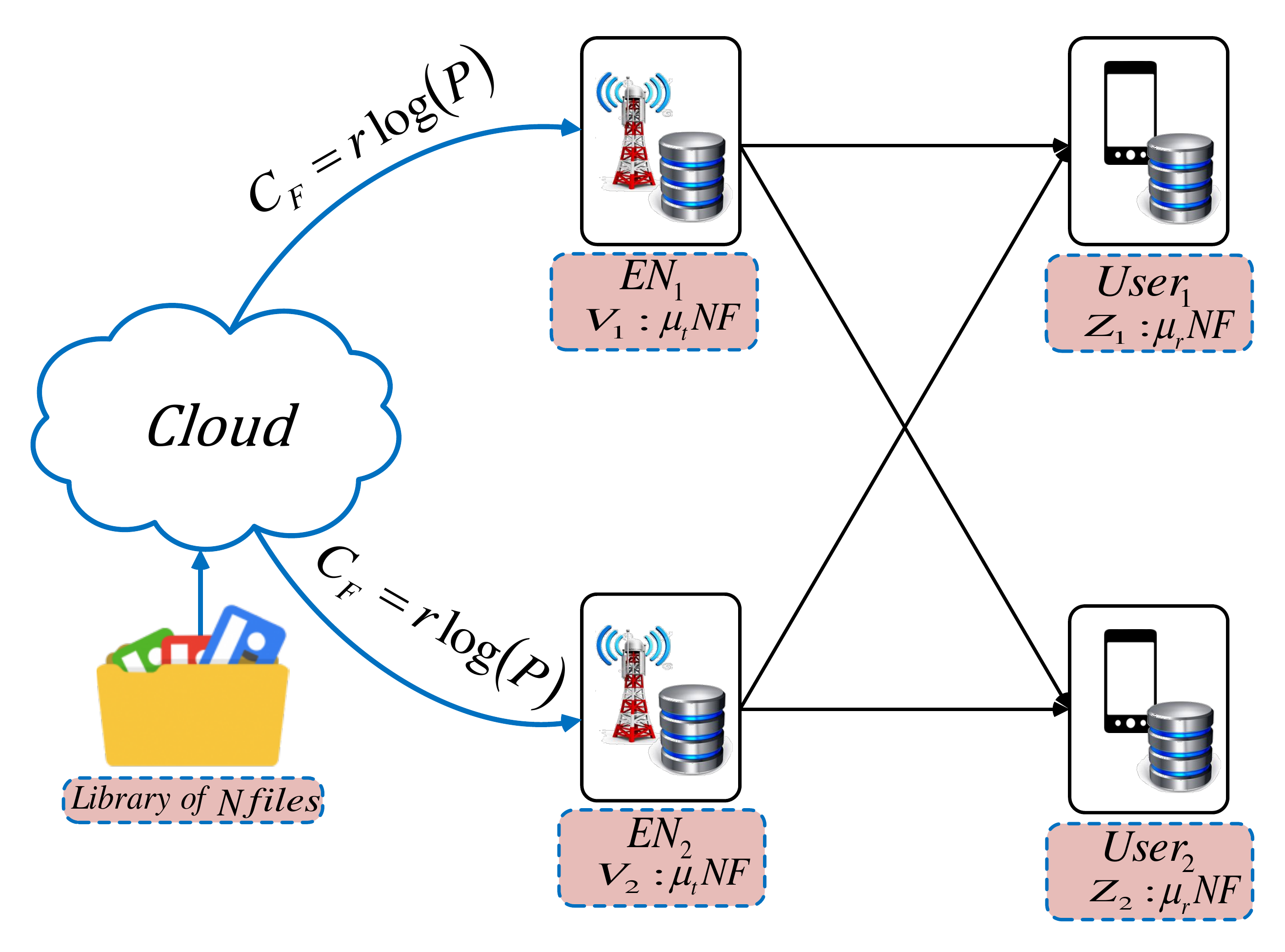}
\caption{A Fog Radio Access Network (F-RAN) with $K_t=2$ edge-nodes and $K_r=2$ users.}
\label{fig1}
\vspace{-0.5cm}
\end{figure}

We point out that the end-to-end latency for an F-RAN depends on two transmission delays, \textit{fronthaul delay} $T_F$, and \textit{edge delay} $T_E$. The fronthaul delay $T_F$ represents the transmission time to deliver the cloud messages to ENs, while the edge delay $T_E$ represents the transmission time to deliver the messages of ENs to end users. In  order to obtain the fronthaul-NDT $\delta_F$, we substitute in~\eqref{NDT} $T=T_F$. Similarly, we get the edge-NDT $\delta_E$ by substituting $T=T_E$ in~\eqref{NDT}. Throughout the paper, we consider two types of transmissions. 

1) \textbf{Serial transmission:} Fronthaul and edge transmissions occur consecutively. The cloud first transmits its messages $\left\{\mathbf{U}_i\right\}_{1}^{K_t}$ to ENs that begin delivering their messages $\mathbf{X}_i$ to the users only after receiving the fronthaul message. As a result, the overall latency is the sum of the fronthaul and edge delays. Thus, the NDT of system with serial transmission is defined as \vspace{-3mm}
\begin{equation}\label{ser}
\delta_S=\delta_F+\delta_E.\vspace{-1mm}
\end{equation}

2) \textbf{Pipelined transmission:} In~\cite{sengupta2016cloud}, the authors introduced the concept of pipelined transmission, where the fronthaul and edge transmissions take place simultaneously. In other words, ENs do not wait for the fronthaul transmission to be completed in order to start transmitting their messages to the users. Accordingly, the NDT of the system for the pipelined transmission is given by 
\begin{equation}\label{pipe}
\delta_P=\max\left\{\delta_F,\delta_E\right\}
\end{equation}
which is the maximum between the fronthaul-NDT and the edge-NDT. In the following, we characterize the trade-off between the normalized cache size $\mu_t$, $\mu_r$ and the NDT of the system for the decentralized caching scheme with both serial transmission $\delta_S$ and pipelined transmission $\delta_P$.

\section{DECENTRALIZED CODED CACHING}
\label{Decentralized}

In this section, we introduce the decentralized coded caching for the $2\times K_r$ F-RAN caching problem. First, we present the decentralized content placement. Then, we design the delivery scheme to serve any user demands in the delivery phase.

\subsection{Decentralized Content Placement} In the placement phase each user $k\in\left[K_r\right]$ independently stores $\mu_rF$ bits from each file $W_j\in\mathcal{W}$ uniformly at random. In a similar way, each $\text{EN}_i$, $i\in\left[K_t\right]$, independently stores $\mu_tF$ bits from each file $W_j\in\mathcal{W}$ uniformly at random. Therefore, the total number of bits stored at each user, and at each EN is $\mu_rNF$ bits and $\mu_tNF$ bits, respectively. Thus, the cache memory size constraint is not violated at each node. Moreover, each bit of a file has probability $\mu_r$ to be cached at a specific user and probability $\mu_t$ to be cached at a specific EN. Accordingly, each file $W_j\in\mathcal{W}$ is partitioned into fragments $W_{j,\mathcal{S}_t,\mathcal{S}_r}$ for $\mathcal{S}_t\subset\left[K_t\right]$ a subset of ENs, and $\mathcal{S}_r\subset\left[K_r\right]$ a subset of users, in which the fragment $W_{j,\mathcal{S}_t,\mathcal{S}_r}$ denotes the bits of file $W_j$ stored exclusively in the cache memory of each EN $i\in\mathcal{S}_t$ and each user $k\in\mathcal{S}r$, but these bits do not exist in the caches of other nodes. For $\mid \mathcal{S}_t\mid=t$, $\mid\mathcal{S}_r\mid=i$, each file $W_j$ is split into $\sum_{t=0}^{K_t}\sum_{i=0}^{K_r}{K_t\choose t}{K_r\choose i}= 2^{K_t+K_r}$ fragments.
\begin{Example}~\label{Ex1}
Consider an F-RAN with $K_t=2$ ENs, $K_r=2$ users, and $N=2$ files denoted by $\mathcal{W}\triangleq\left\{A,B\right\}$. According to the decentralized content placement, we can represent the file $A$ as follows\footnote{We write $A_{0,1}$ as a shorthand for $A_{\lbrace\phi\rbrace,\lbrace1\rbrace}$, and $A_{12,1}$ as a shorthand for $A_{\lbrace1,2\rbrace,\lbrace1\rbrace}$} $A=\Big(A_{0,0},A_{0,1},A_{0,2},A_{0,12},A_{1,0},A_{1,1},A_{1,2}$ $,A_{1,12},A_{2,0},A_{2,1},A_{2,2},A_{2,12},A_{12,0},A_{12,1},A_{12,2},A_{12,12}\Big)$
such that $A_{0,0}$ consists of bits of file $A$ that are not stored at any node, and $A_{2,12}$ consists of bits of file $A$ in the cache memory of all nodes except $\text{EN}_1$. The same follows for file $B$. By strong law of large numbers, the size of fragment $W_{j,\mathcal{S}_t,\mathcal{S}_r}$ is approximately equal to
\begin{equation}\label{eqn1}
\mid W_{j,\mathcal{S}_t,\mathcal{S}_r}\mid\ \approx\ \mu_t^{\mid\mathcal{S}_t\mid}\left(1-\mu_t\right)^{K_t-\mid\mathcal{S}_t\mid}\mu_r^{\mid\mathcal{S}_r\mid}\left(1-\mu_r\right)^{K_r-\mid\mathcal{S}_r\mid}F
\end{equation}
with probability approaching one for large file size $F$. In our example, $\mid A_{1,2}\mid \approx \mu_t\left(1-\mu_t\right)\mu_r\left(1-\mu_r\right)F$.
\end{Example}

\subsection{Coded Delivery}\label{Coded} Consider any demand vector $\mathbf{d}\in\left[N\right]^{K_r}$. In Example~\ref{Ex1}, let the first user request file $A$ and the second user request file $B$. Therefore, the first user needs fragments $\left(A_{0,0},A_{0,2},A_{1,0},A_{1,2},A_{2,0},A_{2,2},A_{12,0},A_{12,2}\right)$ to decode file $A$, and the second user needs fragments $\left(B_{0,0},B_{0,1},B_{1,0},B_{1,1},B_{2,0},B_{2,1},B_{12,0},B_{12,1}\right)$ to decode file $B$. In the following, we explain the delivery scheme to send these fragments. The proposed delivery scheme is divided into five stages. In each stage, the cloud and ENs exploit the wireless channel to deliver a group of fragments of the requested files depending on the cache contents at the nodes. 
 
1) \textit{Delivery of fragments $W_{d_k,\phi,\phi}$}: Since these $K_r$ fragments are not available at any nodes, the cloud sends $\lbrace W_{d_k,\phi,\phi}\rbrace_{k=1}^{K_r}$ to each EN via the fronthaul links. Hence, each EN can perform cooperative zero-forcing (ZF) beamforming on the interference channel, where the system becomes a Multiple-Input-Single-Output (MISO) broadcast channel with $K_t=2$ transmit antennas and $K_r$ receivers. In general, the sum DoF for a broadcast channel with $K_t$ transmit antennas and $K_r$ receivers is $\min\lbrace K_t,K_r\rbrace$~\cite{weingarten2006capacity}. In our case $DoF=2$. Thus, the edge transmission latency to deliver these fragments in the high SNR regime is $T_E^{\left(1\right)}=\frac{K_r\mid W_{d_k,\phi,\phi}\mid F}{2\log\left(P\right)}$. By using the \textit{soft-transfer mode} on fronthaul links as proposed in~\cite{simeone2009downlink}, the cloud implements ZF-beamforming and quantizes the encoded signal to be transmitted to each EN. Therefore, the fronthaul transmission latency equals to the edge transmission latency multiplied by the ratio between the transmission rates $\log\left(P\right)/r\log\left(P\right)$, yielding a fronthaul transmission latency $T_F^{\left(1\right)}=\frac{K_r\mid W_{d_k,\phi,\phi}\mid F}{2r\log\left(P\right)}$. Thus, fronthaul-NDT and edge-NDT for this stage are given by
\begin{equation}
\begin{aligned}
\delta_F^{\left(1\right)}=\frac{K_r}{2r}\mid W_{d_k,\phi,\phi}\mid\approx\frac{K_r}{2r}\left(1-\mu_t\right)^2\left(1-\mu_r\right)^{K_r}\\
\delta_E^{\left(1\right)}=\frac{K_r}{2}\mid W_{d_k,\phi,\phi}\mid \approx\frac{K_r}{2}\left(1-\mu_t\right)^2\left(1-\mu_r\right)^{K_r}
\end{aligned}
\end{equation} 
In our example, the first user receives fragment $A_{0,0}$ and the second user receives fragment $B_{0,0}$ at the end of this stage.

2) \textit{Delivery of fragments $W_{d_k,\phi,\mathcal{S}_r}$, $\mid \mathcal{S}_r\mid\geq 1$}: These fragments are available at least at one user and at none of ENs. For each subset $\mathcal{S}_r\subset \left[K_r\right]$ of cardinality $\mid \mathcal{S}_r\mid=i\in\left[2:K_r\right]$, the cloud sends a multicast message $\oplus_{k\in\mathcal{S}_r} W_{d_k,\phi,\mathcal{S}_r\setminus{\lbrace k\rbrace}}$ to each EN, where $W_{d_k,\phi,\mathcal{S}_r\setminus{\lbrace k\rbrace}}$ refers to the bits of file $W_{d_k}$ requested by user $k$ which are available exclusively at users $\mathcal{S}_r$ except user $k$. Then ENs deliver the fronthaul message to $K_r$ users. Since the $k$th user has the bits $W_{d_{\tilde{k}},\phi,\mathcal{S}_r\setminus{\lbrace\tilde{k}\rbrace}}$ for $\tilde{k}\in\mathcal{S}_r$, the $k$th user can recover its needed bits $W_{d_k,\phi,\mathcal{S}_r\setminus{\lbrace k\rbrace}}$ from the multicast transmission. There are $K_r\choose i$ subsets of cardinality $i$, so the rate of this transmission is given by \vspace{-0.5cm}

\small
\begin{IEEEeqnarray}{l}
R^{\left(2\right)}=\sum_{i=2}^{K_r} {K_r\choose i} \left(1-\mu_t\right)^2 \mu_r^{i-1}\left(1-\mu_r\right)^{K_r-i+1}F\nonumber\\
=\frac{\left(1-\mu_t\right)^2\left(1\mu_r\right)}{\mu_r}\Big[\sum_{i=0}^{K_r}{K_r\choose i}-\left(1-\mu_r\right)^{K_r}\\
\qquad\qquad\qquad\qquad\qquad\qquad\qquad-K_r\mu_r\left(1-\mu_r\right)^{K_r-1}\Big]\nonumber\\
=\frac{\left(1-\mu_t\right)^2\left(1-\mu_r\right)}{\mu_r}\left[1-\left(1-\mu_r\right)^{K_r}-K_r\mu_r\left(1-\mu_r\right)^{K_r-1}\right] F\nonumber
\end{IEEEeqnarray}  
\normalsize
where the expected size of $\oplus_{k\in\mathcal{S}_r} W_{d_k,\phi,\mathcal{S}_r\setminus{\lbrace k\rbrace}}$ with $\mid \mathcal{S}_r\mid=i$ is $\left(1-\mu_t\right)^2 \mu_r^{i-1}\left(1-\mu_r\right)^{K_r-i+1}F$. As a result, fronthaul and edge transmission delays are $T_F^{\left(2\right)}=\frac{R^{\left(2\right)}}{r\log\left(P\right)}$, $T_E^{\left(2\right)}=\frac{R^{\left(2\right)}}{\log\left(P\right)}$, respectively. Moreover, the fronthaul-NDT and edge-NDT are given by
\begin{equation}
\delta_F^{\left(2\right)}=\frac{R^{\left(2\right)}}{rF},  \qquad \delta_E^{\left(2\right)}=\frac{R^{\left(2\right)}}{F}
\end{equation} 
In our example, the transmission in this stage is $\left(A_{0,2}\oplus B_{0,1}\right)$, so that the first user can recover fragment $A_{0,2}$ from the received signal and cached fragment $B_{0,1}$. Similarly, the second user can recover fragment $B_{0,1}$ from the received signal and cached fragment $A_{0,2}$. 

3) \textit{Delivery of fragments $W_{d_k,\mathcal{S}_t,\mathcal{S}_r}$, $\mid \mathcal{S}_r\mid\geq 1$}: In this stage, each subset $\mathcal{S}_t\subset\left[K_t\right]$ of ENs, $\mid\mathcal{S}_t\mid\in\left\{1,2\right\}$, transmits a multicast signal $\oplus_{k\in\mathcal{S}_r} W_{d_k,\mathcal{S}_t,\mathcal{S}_r\setminus{\lbrace k\rbrace}}$ from their cache contents to a subset $\mathcal{S}_r\subset \left[K_r\right]$ of users for $\mid\mathcal{S}_r\mid\geq 2$, so that each user $k\in\mathcal{S}_r$ can recover the needed fragment $W_{d_k,\phi,\mathcal{S}_r\setminus{\lbrace k\rbrace}}$ from the received signal and its cache contents. Thus, the rate of this transmission is given by

\small
\begin{IEEEeqnarray}{l}
R^{\left(3\right)}=\sum_{t=1}^{2}\sum_{i=2}^{K_r} {2\choose t} {K_r\choose i} \mu_t^{t}\left(1-\mu_t\right)^{2-t} \mu_r^{i-1}\left(1-\mu_r\right)^{K_r-i+1}F\nonumber\\
=\frac{\left(1-\left(1-\mu_t\right)^2\right)\left(1-\mu_r\right)}{\mu_r}\Big[1-\left(1-\mu_r\right)^{K_r}\\
\qquad\qquad\qquad\qquad\qquad\qquad\qquad-K_r\mu_r\left(1-\mu_r\right)^{K_r-1}\Big]F.\nonumber
\end{IEEEeqnarray}
\normalsize
As a result, the edge transmission latency is $T_E^{\left(3\right)}=\frac{R^{\left(3\right)}}{\log\left(P\right)}$, yielding the the fronthaul-NDT and edge-NDT
\begin{equation}
\delta_F^{\left(3\right)}=0,  \qquad \delta_E^{\left(3\right)}=\frac{R^{\left(3\right)}}{F}.
\end{equation} 
In our example, $\text{EN}_1$ first transmits a message $\left(A_{1,2}\oplus B_{1,1}\right)$, then, $\text{EN}_2$ transmits a message $A_{2,2}\oplus B_{2,1}$. Finally, both of ENs cooperatively transmit a message $A_{12,2}\oplus B_{12,1}$. Hence, the first user obtains fragments $A_{1,2},A_{2,2},A_{12,2}$, and the second user obtains fragments $B_{1,1},B_{2,1},B_{12,1}$ at the end of this stage. 

\begin{figure}[t!]
 \begin{subfigure}{0.49\linewidth}
  \centerline{\includegraphics[scale=0.3]{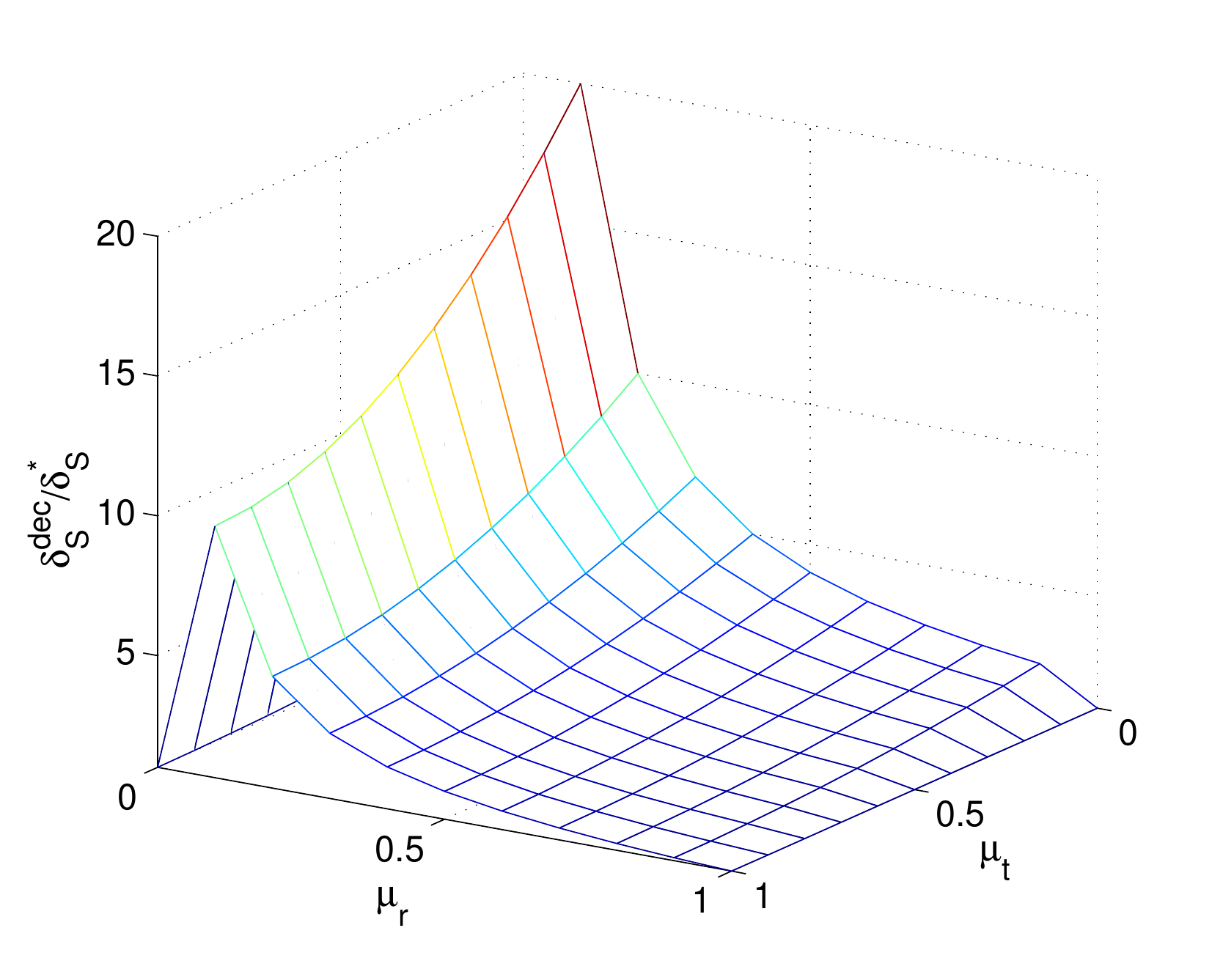}}
 \caption{Serial transmission}\label{figA}
 \end{subfigure}
 \begin{subfigure}{0.49\linewidth}
  \centerline{\includegraphics[scale=0.3]{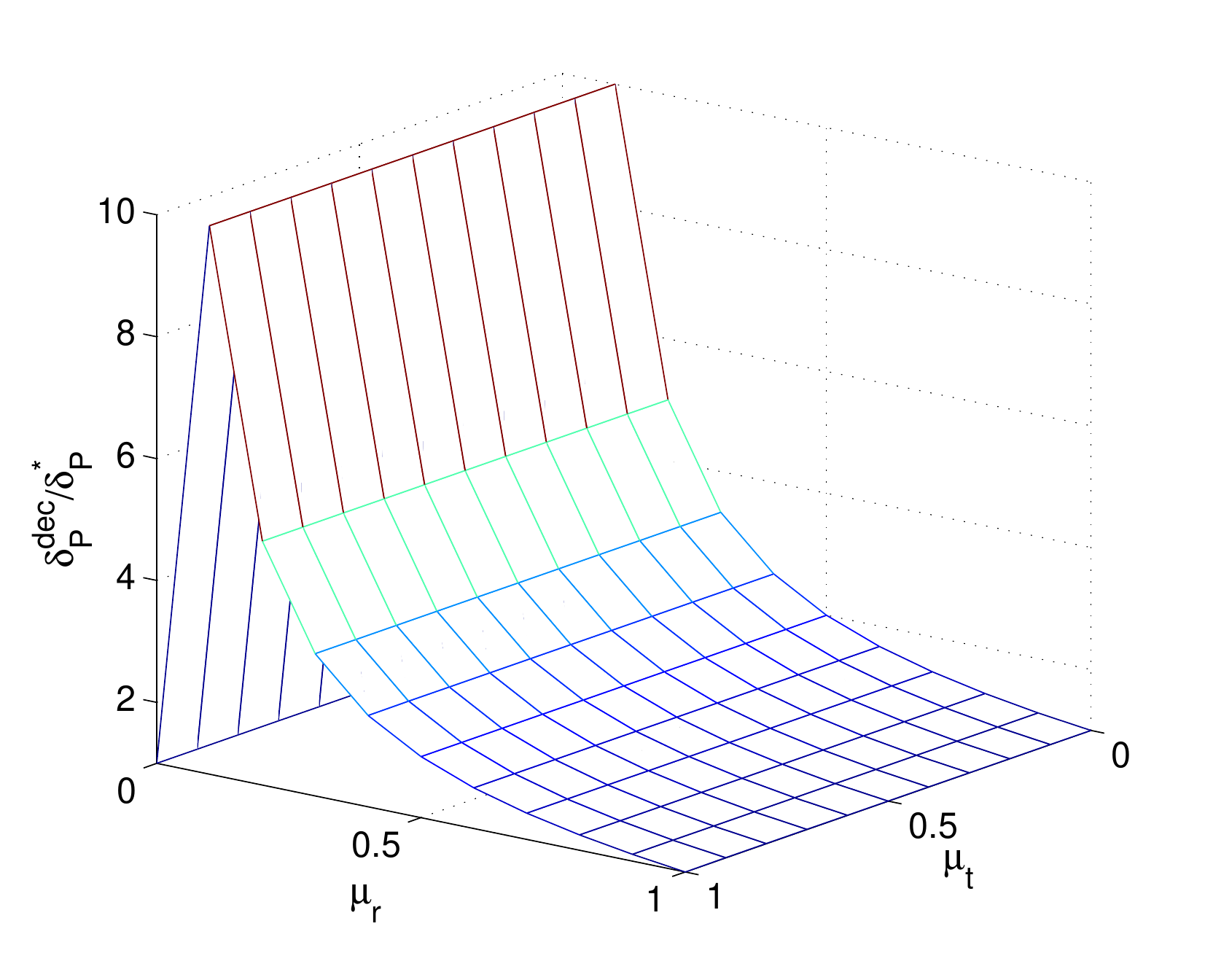}}
\caption{Pipelined transmission}\label{figB}
 \end{subfigure}
\caption{The gap between the decentralized NDT in Theorem~\ref{Th1} and the lower bound in Theorem~\ref{Th2} for $r=1$ and $K_r=100$.}
\label{Fig2}
\vspace{-0.4cm}
\end{figure}

4) \textit{Delivery of fragments $W_{d_k,\left\{1,2\right\},\phi}$}:Each of the two ENs has these $K_r$ fragments. Hence, each EN can implement ZF beamforming on the interference channel to transmit these fragments with $DoF=2$. Thus, the edge transmission latency is $T_E^{\left(4\right)}=\frac{K_r\mid W_{d_k,\lbrace1,2\rbrace,\phi}\mid F }{2\log\left(P\right)}$, yielding the fronthaul-NDT and edge-NDT 
\begin{equation}
\delta_F^{\left(4\right)}=0,  \ \delta_E^{\left(4\right)}=\frac{K_r}{2}\mid W_{d_k,\lbrace1,2\rbrace,\phi}\mid\approx \frac{K_r}{2}\mu_t^2\left(1-\mu_r\right)^{K_r}.
\end{equation}
In our example, the first user receives fragment $A_{12,0}$, and the second user receives fragment $B_{12,0}$ at the end of this stage.

5) \textit{Delivery of fragments $W_{d_k,\mathcal{S}_t,\phi}$, $\mid \mathcal{S}_t\mid=1$}: In this stage, we propose two methods for transmitting these fragments, where the ENs choose the method that minimizes the overall NDT as we will elaborate in Theorem~\ref{Th1}.\\
a) Each EN has a dedicated fragment to each user. Hence, the system can be treated as "$K_t\times K_r$ X-channel", where the sum DoF is $\frac{K_tK_r}{K_t+K_r-1}$ by applying the interference alignment (IA) scheme~\cite{cadambe2009interference}. In our case the sum DoF is $\frac{2K_r}{K_r+1}$. Thus, the edge transmission latency is $T_E^{\left(5\right)}=\frac{2K_r\mid W_{d_k,\lbrace1\rbrace,\phi}\mid F}{\frac{2K_r}{K_r+1}\log\left(P\right)}$ leading to the fronthaul-NDT and edge-NDT
\begin{equation}
\delta_F^{\left(5,a\right)}=0,\ \delta_E^{\left(5,a\right)}= \left(K_r+1\right)\mu_t\left(1-\mu_t\right)\left(1-\mu_r\right)^{K_r}.
\end{equation}
b) Another achievable transmission scheme for this stage is that the cloud transmits fragments $\left\{ W_{d_k,\lbrace1\rbrace,\phi}\right\}_{1}^{K_r}$ to $\text{EN}_2$ and fragments $\left\{ W_{d_k,\lbrace2\rbrace,\phi}\right\}_{1}^{K_r}$ to $\text{EN}_1$ via the fronthaul links. Thus, both ENs have $2K_r$ fragments and the system becomes a broadcast channel with DoF $\min\left\{K_t,K_r\right\}$ by applying ZF beamforming on the interference channel. Hence the fronthaul-NDT and edge-NDT are given by
\begin{equation}
\begin{aligned}
\delta_F^{\left(5,b\right)}=\frac{K_r}{r}\mid W_{d_k,\lbrace1\rbrace,\phi} \mid=\frac{K_r}{r}\mu_t\left(1-\mu_t\right)\left(1-\mu_r\right)^{K_r}\\
\delta_E^{\left(5,b\right)}=K_r\mid W_{d_k,\lbrace1\rbrace,\phi} \mid=K_r\mu_t\left(1-\mu_t\right)\left(1-\mu_r\right)^{K_r}.
\end{aligned}
\end{equation} 

\begin{figure} [t!]
\centering
\includegraphics[scale=0.34]{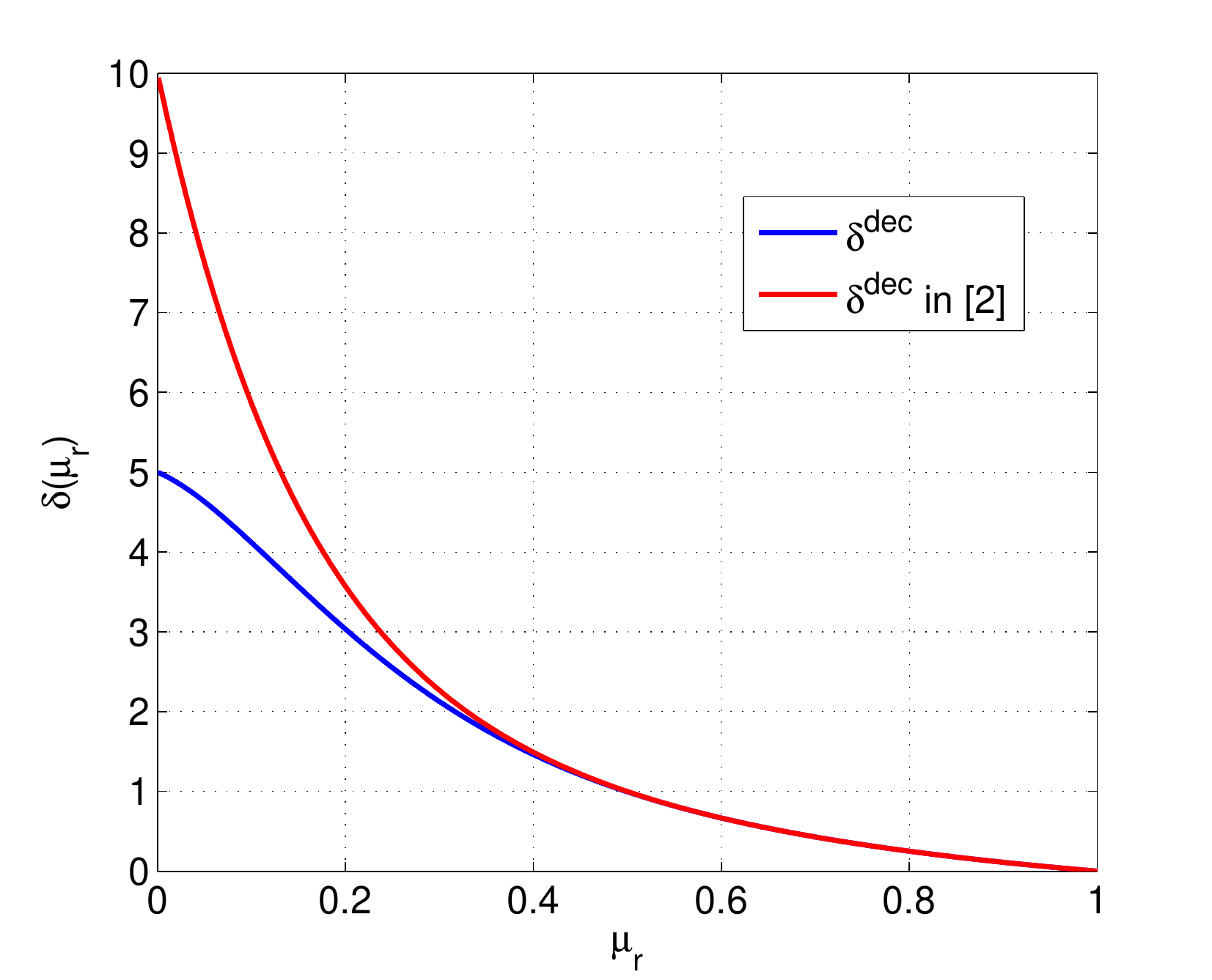}
\caption{Comparison between the decentralized scheme in~\eqref{eqn2} and the decentralized scheme~\cite{maddah2015decentralized} for $K_r=10$.}
\label{Fig3}
\vspace{-0.5cm}
\end{figure}
The following theorem gives the NDT of the proposed decentralized scheme for the serial and pipelined transmissions.
\begin{theorem}~\label{Th1}
For the $2\times K_r$ F-RAN with decentralized caching placement, each EN having a normalized cache size $\mu_t\in\left[0,1\right]$, each user having a normalized cache size $\mu_r\in\left[0,1\right]$, and the fronthaul gain $r> 0$, the proposed coded delivery scheme achieves NDT for serial transmission  
\begin{equation}\label{eqn6}
\delta_S^{dec}\triangleq \left\{\begin{array}{l l}
\delta_F^{\left(a\right)}+\delta_E^{\left(a\right)} & 0<r\leq K_r\\
\delta_F^{\left(b\right)}+\delta_E^{\left(b\right)} & r>K_r\\
\end{array}
\right.
\end{equation}
and achieves NDT for pipelined transmission 
\begin{equation}~\label{eqn7}
\delta_{P}^{dec}=\min\left\{\max\left\{\delta_F^{\left(a\right)},\delta_E^{\left(a\right)}\right\},\max\left\{\delta_F^{\left(b\right)},\delta_E^{\left(b\right)}\right\} \right\}
\end{equation}
\small
\begin{equation}\label{eqn3}
\begin{array}{l}
\delta_F^{\left(a\right)}=\frac{\left(1-\mu_t\right)^2\left(1-\mu_r\right)}{r\mu_r}\left[1-\left(1-\mu_r\right)^{K_r}-\frac{K_r}{2}\mu_r\left(1-\mu_r\right)^{K_r-1}\right]\\
\delta_F^{\left(b\right)}=\frac{\left(1-\mu_t\right)^2\left(1-\mu_r\right)}{r\mu_r}\Big[1-\left(1-\mu_r\right)^{K_r}\\
\qquad\qquad\qquad-\frac{K_r}{2}\mu_r\left(1-\mu_r\right)^{K_r-1}\left(\frac{1-3\mu_t}{1-\mu_t}\right)\Big]\\
\delta_E^{\left(a\right)}=\frac{\left(1-\mu_r\right)}{\mu_r}\Big[1-\left(1-\mu_r\right)^{K_r}\\
\qquad\qquad\qquad-\left(\frac{K_r}{2}-\mu_t\left(1-\mu_t\right)\right)\mu_r\left(1-\mu_r\right)^{K_r-1}\Big]\\
\delta_E^{\left(b\right)}=\frac{\left(1-\mu_r\right)}{\mu_r}\left[1-\left(1-\mu_r\right)^{K_r}-\frac{K_r}{2}\mu_r\left(1-\mu_r\right)^{K_r-1}\right]
\end{array}
\end{equation}
\normalsize
\end{theorem}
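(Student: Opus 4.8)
The plan is to account for the fronthaul and edge latencies incurred in each of the five delivery stages described above, add them up, and then read off the serial and pipelined NDTs from~\eqref{ser} and~\eqref{pipe}. Each per-stage pair $\delta_F^{(j)},\delta_E^{(j)}$ (and, for Stage 5, the two variants (a) and (b)) has already been written in closed form above, using the fragment-size estimate~\eqref{eqn1} together with the cited degrees-of-freedom results for the MISO broadcast channel, the X-channel interference alignment, and the soft-transfer fronthaul mode; these hold with probability approaching one as $F\to\infty$ by the strong law of large numbers applied to the fragment sizes. So the remaining work is purely algebraic consolidation.

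First I would treat the scheme that uses variant (a) in Stage 5. Only Stages 1 and 2 load the fronthaul, so $\delta_F^{(a)}=\delta_F^{(1)}+\delta_F^{(2)}$, while every stage contributes to the edge, so $\delta_E^{(a)}=\delta_E^{(1)}+\delta_E^{(2)}+\delta_E^{(3)}+\delta_E^{(4)}+\delta_E^{(5,a)}$. The consolidation rests on three observations: (i) the truncated binomial identity $\sum_{i=2}^{K_r}\binom{K_r}{i}\mu_r^{i-1}(1-\mu_r)^{K_r-i+1}=\frac{1-\mu_r}{\mu_r}\bigl(1-(1-\mu_r)^{K_r}-K_r\mu_r(1-\mu_r)^{K_r-1}\bigr)$, i.e.\ the binomial theorem with the $i=0$ and $i=1$ terms removed, which produces the bracketed factor common to $R^{(2)}$ and $R^{(3)}$; (ii) adding $R^{(2)}$ and $R^{(3)}$ makes their prefactors $(1-\mu_t)^2$ and $1-(1-\mu_t)^2$ collapse to $1$; and (iii) collecting the stray $(1-\mu_r)^{K_r}$ terms coming from Stages 1, 4 and 5, namely $\tfrac{K_r}{2}(1-\mu_t)^2+\tfrac{K_r}{2}\mu_t^2+(K_r+1)\mu_t(1-\mu_t)$, simplifies to $\tfrac{K_r}{2}+\mu_t(1-\mu_t)$, whereas inside $\delta_F^{(a)}$ (after folding $\delta_F^{(1)}$ into $\delta_F^{(2)}$) the corresponding coefficient of $(1-\mu_t)^2(1-\mu_r)^{K_r}$ is $-\tfrac{K_r}{2}$. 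Re-expressing these coefficients in the normalized form $\bigl(1-(1-\mu_r)^{K_r}-c\,\mu_r(1-\mu_r)^{K_r-1}\bigr)$ reproduces $\delta_F^{(a)}$ and $\delta_E^{(a)}$ of~\eqref{eqn3}.

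For the variant-(b) scheme the only changes are $\delta_F^{(5,b)}=\tfrac{K_r}{r}\mu_t(1-\mu_t)(1-\mu_r)^{K_r}$ in place of $0$, and $\delta_E^{(5,b)}=K_r\mu_t(1-\mu_t)(1-\mu_r)^{K_r}$ in place of $(K_r+1)\mu_t(1-\mu_t)(1-\mu_r)^{K_r}$. Repeating the same bookkeeping, the coefficient of $(1-\mu_r)^{K_r}$ in the edge term now collapses via $(1-\mu_t)^2+2\mu_t(1-\mu_t)+\mu_t^2=1$ to the clean $\tfrac{K_r}{2}$ of $\delta_E^{(b)}$, while the extra fronthaul term combines with the $-\tfrac{K_r}{2r}(1-\mu_t)^2(1-\mu_r)^{K_r}$ already present in $\delta_F^{(a)}$ to produce the $\tfrac{1-3\mu_t}{1-\mu_t}$ factor in $\delta_F^{(b)}$. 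Once the four closed forms are in hand, the serial claim~\eqref{eqn6} follows from comparing $\delta_F^{(a)}+\delta_E^{(a)}$ with $\delta_F^{(b)}+\delta_E^{(b)}$: their difference equals $\bigl(\tfrac{K_r}{r}-1\bigr)\mu_t(1-\mu_t)(1-\mu_r)^{K_r}$, which is nonnegative exactly when $r\le K_r$, so one selects variant (a) for $0<r\le K_r$ and variant (b) for $r>K_r$. The pipelined claim~\eqref{eqn7} is then immediate: by~\eqref{pipe} each variant achieves $\max\{\delta_F,\delta_E\}$ of its own fronthaul/edge pair, and since the ENs may adopt whichever variant is smaller for the given $(\mu_t,\mu_r,r)$, the achieved pipelined NDT is the minimum of the two.

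The step I expect to demand the most care is observation (iii): one must track which $(1-\mu_r)^{K_r-1}$ and $(1-\mu_r)^{K_r}$ contributions come from $\delta_F^{(1)},\delta_E^{(1)},\delta_E^{(4)},\delta_E^{(5)}$ and which come from the truncated binomial sums in $R^{(2)},R^{(3)}$, combine them correctly, and re-normalize the collected coefficient into the exact form shown in~\eqref{eqn3}. I would also verify explicitly that the five stages concatenate without interaction — each stage delivers a disjoint group of the requested fragments over a fresh block of channel uses, so no cross-stage interference arises and the overall fronthaul and edge NDTs are genuinely the sums of the per-stage values.
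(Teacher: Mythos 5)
Your proposal is correct and follows essentially the same route as the paper's proof: sum the per-stage fronthaul and edge NDTs for the two Stage-5 variants, then apply the serial rule $\delta_S=\delta_F+\delta_E$ and the pipelined rule $\delta_P=\max\{\delta_F,\delta_E\}$ and pick the better variant, with the threshold at $r=K_r$. Your explicit difference computation $\bigl(\tfrac{K_r}{r}-1\bigr)\mu_t\left(1-\mu_t\right)\left(1-\mu_r\right)^{K_r}$ and the coefficient bookkeeping in observations (i)--(iii) are accurate and in fact supply algebraic detail that the paper's appendix only asserts.
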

\begin{proof}
See Appendix~\ref{App1}
\end{proof}

%
%

\textbf{Lower Bound:} To evaluate the performance of the decentralized scheme, we derive a generic lower bound on the optimum NDT for the $K_t\times K_r$ F-RAN caching problem for serial and pipelined transmissions. 
\begin{theorem}~\label{Th2}  For an F-RAN with fronthaul capacity $C_F=r\log\left(P\right)$, $K_t$ ENs, each with a normalized cache size $\mu_t$, and $K_r$ users, each with a normalized cache size $\mu_r$, the optimum NDT $\delta^{*}_S$ for serial transmission is lower bounded by 
\begin{equation}
\delta^{*}_S\geq \delta_{LB}
\end{equation}
where $\delta_{LB}=\delta_F+\delta_E$ is the solution of the following linear programming
\begin{align}
&\min\ \delta_F+\delta_E\nonumber\\
\text{subject to} \ &s\delta_E+\left(K_t-s\right)r\delta_F\geq  f\left(s\right)\ s\in\mathcal{M}\label{cons1}\\
&\delta_F\geq 0,\ \delta_E\geq \left(1-\mu_r\right)
\end{align}
For pipelined transmission, the optimum NDT $\delta_P^{*}$ is lower bounded by

\small
\begin{equation}~\label{Lowerp}
\delta_P^{*}\geq\max\left\{\max_{s\in\left[0:\min\lbrace K_t,K_r\rbrace\right]}\left\{\frac{f\left(s\right)}{s+\left(K_t-s\right)r}\right\},\left(1-\mu_r\right)\right\}
\end{equation}\normalsize
In~\eqref{cons1} and~\eqref{Lowerp}, $\mathcal{M}\triangleq\left[0:\min\lbrace K_t,K_r\rbrace\right]$, and $f\left(s\right)= K_r\left(1-s\mu_r\right)-\left(K_r-s\right)\left[\left(K_r-s\right)\mu_r+\left(K_t-s\right)\mu_t\right]$.
\end{theorem}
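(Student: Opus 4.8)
The plan is a genie-aided cut-set converse in the spirit of the Maddah-Ali--Niesen bound, adapted to the two hops of the F-RAN. Two of the linear-programming constraints come for free. The bound $\delta_F\geq 0$ is trivial, and $\delta_E\geq 1-\mu_r$ follows by isolating a single user requesting a single file: its cache holds at most $M_rF=\mu_r NF$ bits, hence at most $\mu_r F$ bits of that particular file, so it must still receive the remaining $(1-\mu_r)F$ bits over a point-to-point channel of capacity $\log(P)+o(\log(P))$, forcing $T_E\geq (1-\mu_r)F/\log(P)$ up to vanishing terms; the limits in the NDT definition then give $\delta_E\geq 1-\mu_r$, and the same argument with the fronthaul gives nothing beyond $\delta_F\geq 0$.

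The core is the cut family \eqref{cons1}. I would fix $s\in\mathcal{M}$, single out $s$ ``remote'' edge nodes $\mathcal{A}$ and $s$ ``listening'' users $\mathcal{B}$ (by the symmetry of the model and the worst-case demand one may take $\mathcal{A}=\mathcal{B}=\left[s\right]$), and play a sequence of demand vectors $\mathbf{d}^{\left(1\right)},\dots,\mathbf{d}^{\left(L\right)}$ with $L=\left\lfloor N/K_r\right\rfloor$ so that over the $L$ rounds the $K_r$ users jointly request $K_rL$ distinct files --- the standard cyclic-demand construction. The key observation is that a single genie-aided receiver holding the caches $\left\{V_i\right\}_{i\notin\mathcal{A}}$ and all fronthaul messages $\left\{\mathbf{U}_i^{\left(\ell\right)}\right\}_{i\notin\mathcal{A}}$ of the remaining $K_t-s$ edge nodes can compute their transmit signals $\left\{\mathbf{X}_i^{\left(\ell\right)}\right\}_{i\notin\mathcal{A}}$, subtract their contribution from the $s$ received signals $\left\{\mathbf{Y}_k^{\left(\ell\right)}\right\}_{k\in\mathcal{B}}$, and be left with a symbol-by-symbol $s\times s$ linear system whose channel matrix is almost surely invertible --- this is where the continuity of the law of $\mathbf{H}$ enters --- so in the high-SNR limit it also recovers $\left\{\mathbf{X}_i^{\left(\ell\right)}\right\}_{i\in\mathcal{A}}$ and hence every transmit signal and every receive signal. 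Supplying it in addition the caches of all $K_r$ users lets it run every user's decoder and reconstruct the $K_rL$ requested files, and a Fano/entropy count then gives $K_rLF\leq L\,s\,(\log P)\,T_E+L\,(K_t-s)\,r(\log P)\,T_F+\bigl[(K_t-s)\mu_t+K_r\mu_r\bigr]NF$, the first term being the sum-rate of the $K_t\times s$ edge channel with $\min\{s,K_t\}=s$, the second the $K_t-s$ fronthaul links, the third the genie-supplied caches; dividing by $L$, normalizing by $F/\log(P)$, and letting $F\to\infty$ then $P\to\infty$ yields a cut of the shape $s\delta_E+(K_t-s)r\delta_F\geq K_r\bigl(1-K_r\mu_r-(K_t-s)\mu_t\bigr)$.

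Once all the constraints are in hand the two displayed bounds follow with no further effort: for serial transmission every achievable pair $\left(\delta_F,\delta_E\right)$ is a feasible point of the stated linear program, so its optimal value lower-bounds $\delta_F+\delta_E$, i.e. $\delta_S^{*}\geq\delta_{LB}$; for pipelined transmission, combining the constraints with $\delta_F,\delta_E\leq\delta_P=\max\{\delta_F,\delta_E\}$ gives $f\left(s\right)\leq\bigl(s+(K_t-s)r\bigr)\delta_P$ for every $s$, hence $\delta_P\geq f\left(s\right)/\left(s+(K_t-s)r\right)$, which together with $\delta_P\geq\delta_E\geq 1-\mu_r$ is exactly \eqref{Lowerp}. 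The main obstacle is the tightness of the cut step: the direct count above is lossy and falls short of the stated $f\left(s\right)=K_r(1-s\mu_r)-(K_r-s)\left[(K_r-s)\mu_r+(K_t-s)\mu_t\right]$ by precisely $s\bigl[(K_r-s)\mu_r+(K_t-s)\mu_t\bigr]$, because it charges each non-listening file against all $K_r$ user caches instead of only the $K_r-s$ non-listening ones plus the $K_t-s$ remote edge-node caches. Recovering this saving requires the finer, recursive Maddah-Ali--Niesen accounting --- revealing the requested files to the genie progressively and averaging over demand permutations so that each cache is conditioned against only the files it can still be helping with --- and carrying this out, together with the routine but delicate bookkeeping needed to simulate the non-listening users' decoders from the reconstructed transmit signals and to push the noise and the finite-SNR channel inversion into the $o(\log P)$ terms, is where the real work lies.
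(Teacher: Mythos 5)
Your setup (the trivial constraints, the genie-aided cut with $s$ listening users and $K_t-s$ revealed ENs, the almost-sure invertibility of the $s\times s$ subchannel, and the final LP / $\max$ steps for serial and pipelined NDT) matches the paper's route. But the central step is not actually proved: as you yourself compute, your direct count charges the genie-supplied side information at $\left[(K_t-s)\mu_t+K_r\mu_r\right]NF$ and therefore only yields $s\delta_E+(K_t-s)r\delta_F\geq K_r\bigl(1-K_r\mu_r-(K_t-s)\mu_t\bigr)$, which falls short of the claimed $f(s)$ by $s\bigl[(K_r-s)\mu_r+(K_t-s)\mu_t\bigr]$. Deferring the recovery of this deficit to ``the finer, recursive Maddah-Ali--Niesen accounting'' with multi-round cyclic demands and permutation averaging leaves constraint \eqref{cons1} unestablished, and that constraint is the whole content of the theorem; the serial and pipelined bounds are immediate once it is in hand.

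Moreover, the machinery you point to is heavier than what is needed. The paper closes the gap with a single worst-case demand (all $K_r$ requested files distinct) and one targeted conditioning refinement: since the $s$ listening users can decode $W_{[1:s]}$ from $\mathbf{Y}_{[1:s]}$ and $Z_{[1:s]}$, the EN caches $V_{[1:K_t-s]}$, conditioned on $\bigl(W_{[K_r+1:N]},\mathbf{Y}_{[1:s]},Z_{[1:K_r]}\bigr)$, carry at most $(K_t-s)(K_r-s)\mu_tF$ bits (only the $K_r-s$ still-unknown requested files count), and $H\bigl(Z_{[1:K_r]}\mid W_{[K_r+1:N]},\mathbf{Y}_{[1:s]}\bigr)$ splits as at most $sK_r\mu_rF$ for the listening users plus $(K_r-s)^2\mu_rF$ for the remaining users (again conditioning the latter on $W_{[1:s]}$). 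Adding the edge term $sT_E\log P$, the fronthaul term $(K_t-s)rT_F\log P$, and these cache terms against $K_rF$ gives exactly $f(s)$ after the $F\to\infty$, $P\to\infty$ limits --- no multi-round demands, no symmetrization over permutations, no induction. So the missing idea in your write-up is precisely this conditioning order (exploit decodability of $W_{[1:s]}$ before counting the EN caches and the non-listening users' caches), and without it the proof as proposed establishes only a strictly weaker lower bound.
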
 
\begin{proof}
The proof of Theorem~\ref{Th2} uses the cut-set lower bound on any achievable NDT. The proof can be found in Appendix~\ref{App2}
\end{proof}

\vspace{-4mm}

\section{PERFORMANCE ANALYSIS}
In this section, we evaluate the performance of the decentralized scheme introduced in Section~\ref{Decentralized}. We first evaluate numerically the performance of the decentralized scheme for the general case. Then, we discuss two special cases.

Fig.~\ref{Fig2} shows the gap between the achievable decentralized NDT in Theorem~\ref{Th1} and the lower bound in Theorem~\ref{Th2} for serial and pipelined transmissions with fronthaul gain $r=1$ and a number of users $K_r=100$. It is observed that the maximum gap is about $20$ for serial transmission and $10$ for pipelined transmission. Moreover, we observe that the gap decreases rapidly as $\mu_r\rightarrow 1$, however, there is only a subtle decrease as $\mu_t$ increases.

\textbf{For $\mu_t=1$ or/and $r\rightarrow\infty$:} In this case, each EN has all the library files, and hence, the system becomes a MISO broadcast channel with two transmit antennas. From Theorem~\ref{Th1}, the decentralized scheme achieves NDT as

\small
\begin{equation}\label{eqn2}
\delta^{dec}\left(\mu_r\right)=\frac{\left(1-\mu_r\right)}{\mu_r}\left[1-\left(1-\mu_r\right)^{K_r}-\frac{K_r}{2}\mu_r\left(1-\mu_r\right)^{K_r-1}\right]
\end{equation}
\normalsize
Comparing with the performance of the broadcast channel with a single transmit antenna in~\cite{maddah2015decentralized}, NDT in~\eqref{eqn2} is minimum with term $\frac{K_r}{2}\left(1-\mu_r\right)^{K_r}$, since $K_r$ fragments cached at none of the users are transmitted with $DoF=2$ in our case as compared to $DoF=1$ in~\cite{maddah2015decentralized}. In Fig~\ref{Fig3}, we see that this reduction is significant when $\mu_r$ is low; however, the two systems achieve the same performance when $\mu_r$ approaches to one.

\textbf{For $\mu_r=0$:} The network becomes an F-RAN with caches at ENs only. In~\cite{sengupta2016cloud}, the authors provide a centralized coded caching scheme for this network within a constant factor $2$ from the lower bound. From Theorem~\ref{Th1}, the decentralized scheme for $\mu_r\rightarrow 0$ achieves 
\begin{IEEEeqnarray}{l}
\delta_S^{dec}=\left\{\begin{array}{ll}
\frac{K_r}{2}\left[\frac{\left(1-\mu_t\right)^2}{r}+1\right]+\mu_t\left(1-\mu_t\right) & 0<r\leq K_r\\
\frac{Kr}{2}\left[\frac{\left(1-\mu_t^2\right)}{r}+1\right] & r>K_r
\end{array}\right.~\label{eqn5}\\
\delta_P^{dec}=\left\{\begin{array}{ll}
\frac{K_r}{2} & r\geq r_1\\
\frac{Kr}{2r}\left(1-\mu_t^2\right) & r_2\leq r<r_1\\
\frac{Kr}{2r}\left(1-\mu_t\right)^2 & 0< r<r_2
\end{array}\right.\label{eqn4}
\end{IEEEeqnarray}
where \small$r_2=\left(1-\mu_t\right)^2/$ $\left(1+2\mu_t\left(1-\mu_t\right)/K_r\right)$, $r_1=\left(1-\mu_t^2\right)$.\normalsize
\begin{theorem}~\label{Th3}
For pipelined transmission, the decentralized scheme~\eqref{eqn4} is optimum for $\mu_t\in\left[0,1\right]$, $r\geq \left(1-\mu_t^2\right)$ and $\mu_t\in\left[0,1/2\right)$, $0<r$. Let $\delta^{*}_S$ be the optimal centralized caching scheme for serial transmission. We have $\frac{\delta_S^{dec}}{\delta_S^{*}}\leq 3$ for $\mu_t\in\left[0,1\right]$, $r\geq 1$ and $\mu_t\in\left[0,\sqrt{2}-1\right]$, $0<r$.
\end{theorem}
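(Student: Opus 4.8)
To prove Theorem~\ref{Th3}, the plan is to specialize both the achievability result (Theorem~\ref{Th1}) and the converse (Theorem~\ref{Th2}) to $\mu_r=0$ and then compare the two, region by region. The first step is the $\mu_r\to0$ limit. Writing $1-(1-\mu_r)^{K_r}=K_r\mu_r-\tfrac{K_r(K_r-1)}{2}\mu_r^2+O(\mu_r^3)$ and $(1-\mu_r)^{K_r-1}=1-(K_r-1)\mu_r+O(\mu_r^2)$, the $\mu_r^2$ terms inside each bracket of~\eqref{eqn3} cancel, so that $\mu_r\delta_F^{(a)},\mu_r\delta_F^{(b)},\mu_r\delta_E^{(a)},\mu_r\delta_E^{(b)}$ each converge; dividing by $\mu_r$ gives $\delta_F^{(a)}\to\tfrac{K_r(1-\mu_t)^2}{2r}$, $\delta_F^{(b)}\to\tfrac{K_r(1-\mu_t^2)}{2r}$, $\delta_E^{(a)}\to\tfrac{K_r}{2}+\mu_t(1-\mu_t)$, $\delta_E^{(b)}\to\tfrac{K_r}{2}$, and tracking the case split in~\eqref{eqn6} and the nested $\min/\max$ in~\eqref{eqn7} reproduces~\eqref{eqn5} and~\eqref{eqn4} (the threshold $r=K_r$ is where $\delta_F^{(a)}+\delta_E^{(a)}$ overtakes $\delta_F^{(b)}+\delta_E^{(b)}$; $r_1,r_2$ are the corresponding crossovers in the pipelined case). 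On the converse side, $\mu_r=0$ turns $f(s)$ into $K_r-(K_r-s)(K_t-s)\mu_t$, so for $K_t=2$ we get $f(0)=K_r(1-2\mu_t)$, $f(1)=K_r-(K_r-1)\mu_t$, $f(2)=K_r$; hence Theorem~\ref{Th2} gives $\delta_P^{*}\ge\max\{\tfrac{K_r(1-2\mu_t)}{2r},\tfrac{K_r-(K_r-1)\mu_t}{1+r},\tfrac{K_r}{2},1\}$, while the serial LP forces $\delta_E\ge\tfrac{K_r}{2}$ (the $s=2$ constraint) and, when $\mu_t<\tfrac12$, $\delta_F\ge\tfrac{K_r(1-2\mu_t)}{2r}$ (the $s=0$ constraint).

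Next come the easy cases. For $r\ge1-\mu_t^2$ the achievable pipelined NDT is $\delta_P^{dec}=\tfrac{K_r}{2}$ by~\eqref{eqn4}, while the $s=K_t$ term of the converse is $f(2)/2=\tfrac{K_r}{2}$, so the two bounds coincide and the scheme is optimal. For the serial factor with $r\ge1$: the LP already gives $\delta_{LB}\ge\tfrac{K_r}{2}$; on the branch $1\le r\le K_r$, $\delta_S^{dec}=\tfrac{K_r(1-\mu_t)^2}{2r}+\tfrac{K_r}{2}+\mu_t(1-\mu_t)\le\tfrac{3K_r}{2}$ (bound each summand by $\tfrac{K_r}{2}$, using $(1-\mu_t)^2\le1$, $1/r\le1$, $\mu_t(1-\mu_t)\le\tfrac14\le\tfrac{K_r}{2}$), and on the branch $r>K_r$, $\delta_S^{dec}=\tfrac{K_r(1-\mu_t^2)}{2r}+\tfrac{K_r}{2}\le\tfrac{3K_r}{2}$; hence $\delta_S^{dec}/\delta_S^{*}\le\delta_S^{dec}/\delta_{LB}\le3$ for every $\mu_t\in[0,1]$. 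For the serial factor with $\mu_t\le\sqrt2-1$ and all $r>0$: relaxing the LP to its $s=0$ and $s=2$ constraints gives $\delta_{LB}\ge\tfrac{K_r(1-2\mu_t)}{2r}+\tfrac{K_r}{2}$ (here $\mu_t<\tfrac12$ makes the $s=0$ term positive), and the elementary equivalence $\mu_t\le\sqrt2-1\iff\mu_t^2+2\mu_t-1\le0\iff(1-\mu_t)^2\le2(1-2\mu_t)$ yields $\tfrac{K_r(1-\mu_t)^2}{2r}\le2\cdot\tfrac{K_r(1-2\mu_t)}{2r}$ and $\tfrac{K_r}{2}+\mu_t(1-\mu_t)\le2\cdot\tfrac{K_r}{2}$, so on the branch $r\le K_r$, $\delta_S^{dec}\le2\delta_{LB}\le2\delta_S^{*}\le3\delta_S^{*}$; the branch $r>K_r$ is covered by the $r\ge1$ case.

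The step I expect to be the real obstacle is the asserted pipelined optimality for $\mu_t\in[0,\tfrac12)$ with small $r$, i.e.\ in the fronthaul-limited range $r<1-\mu_t^2$. There the achievable~\eqref{eqn4} equals $\tfrac{K_r(1-\mu_t)^2}{2r}$ (or $\tfrac{K_r(1-\mu_t^2)}{2r}$), whereas the $s=0$ cut of Theorem~\ref{Th2} only yields $\tfrac{K_r(1-2\mu_t)}{2r}$, and $(1-\mu_t)^2>1-2\mu_t$ as soon as $\mu_t>0$; so the straightforward cut-set converse is not tight in the obvious way and the residual gap is $O(\mu_t^2)$ at small $r$. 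To close it I would try to combine the $s=0$ and $s=1$ cuts with the edge bound $\delta_E\ge1-\mu_r$, or bring in a sharper genie-aided converse that charges the fronthaul for the part of the library that still must be delivered when the edge caches overlap; failing that, the statement should be read to leading order in $K_r$ and/or restricted to the sub-interval of $(0,1-\mu_t^2)$ on which these cuts do combine to match~\eqref{eqn4}. Pinning down that exact sub-regime and the matching cut is where the work concentrates; the remainder is the $\mu_r=0$ bookkeeping of the first paragraph together with the routine inequalities above.
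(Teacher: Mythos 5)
The parts you actually prove are correct and follow essentially the same route as the paper's Appendix~\ref{App3}: specialize Theorems~\ref{Th1} and~\ref{Th2} to $\mu_r=0$, $K_t=2$ (your $\mu_r\to 0$ limits of~\eqref{eqn3} and the values $f(0)=K_r(1-2\mu_t)$, $f(1)=K_r(1-\mu_t)+\mu_t$, $f(2)=K_r$ agree with~\eqref{eqn5},~\eqref{eqn4} and with the paper's inequalities~\eqref{in1}--\eqref{in3},~\eqref{inq1}--\eqref{inq3}), use the $s=2$ cut $\delta_E\geq K_r/2$ (resp.\ $\delta_P^{*}\geq K_r/2$) to get the factor $\leq 3$ for $r\geq 1$ and exact pipelined optimality for $r\geq 1-\mu_t^2$, and combine the $s=0$ and $s=2$ cuts with the equivalence $\mu_t\leq\sqrt{2}-1\iff(1-\mu_t)^2\leq 2(1-2\mu_t)$ for the small-$r$ serial case. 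The only cosmetic difference is that the paper also folds the $s=1$ constraint into the small-$r$ serial lower bound (obtaining $\delta_S^{*}\geq\frac{K_r}{2r}(1-2\mu_t)+\frac{K_r}{2}+\mu_t$) and splits $r\geq1$ into $1\leq r<K_r$ and $r\geq K_r$; your relaxation to the $s\in\{0,2\}$ constraints yields the same constants.

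The gap you flag, however, is genuine, and it is not one you could have closed by choosing a better cut from Theorem~\ref{Th2}: for $\mu_t\in(0,1/2)$ and $0<r<1-\mu_t^2$ the achievable~\eqref{eqn4} is $\frac{K_r}{2r}(1-\mu_t)^2$ or $\frac{K_r}{2r}(1-\mu_t^2)$, while the best of the three cuts in~\eqref{Lowerp} (e.g.\ $\frac{K_r(1-2\mu_t)}{2r}$ from $s=0$) stays strictly below it for every $\mu_t>0$; even taking the maximum over $s\in\{0,1,2\}$ leaves a ratio strictly greater than $1$ except at $\mu_t=0$. The paper's own proof of this regime asserts $\frac{(1-\mu_t)^2}{1-2\mu_t}\leq 1$ and $\frac{1-\mu_t^2}{1-2\mu_t}\leq 1$ and then sandwiches with $\delta_P^{dec}/\delta_P^{*}\geq 1$; these inequalities are false for all $\mu_t\in(0,1/2)$, so the claimed exact pipelined optimality on $\mu_t\in(0,1/2)$, $0<r<1-\mu_t^2$ is not actually established by the paper either --- what the cut-set bound supports there is only a multiplicative gap of $\frac{(1-\mu_t)^2}{1-2\mu_t}$ (resp.\ $\frac{1-\mu_t^2}{1-2\mu_t}$), which diverges as $\mu_t\to 1/2$. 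In short, your proposal proves the serial factor-of-$3$ claim in full and the pipelined optimality claim only for $r\geq 1-\mu_t^2$ (and trivially at $\mu_t=0$); the missing regime coincides exactly with the erroneous step in the paper, so closing it would require a genuinely sharper converse than Theorem~\ref{Th2}, or the first claim of the theorem must be weakened along the lines you suggest.
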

\begin{proof}
The proof is presented in Appendix~\ref{App3}
\end{proof}
Theorem~\ref{Th3} shows that the proposed decentralized scheme is approximately optimal for pipelined and serial transmissions. However, the optimality is not maintained in case \small$\mu_t\in\left[\sqrt{2}-1,1\right]$, $0<r<1$ \normalsize for serial transmission and \small$\mu_t\in\left[1/2,1\right]$, $0<r<\left(1-\mu_t^2\right)$ \normalsize for pipelined transmission. The reason is that when the fronthaul gain is low and the aggregate size of ENs' caches is high enough to store all the library files, the best strategy is to store different contents in ENs so that the fronthaul link would not be used. However, this strategy is not valid in the decentralized scheme due to the random content placement. \vspace{-1mm}
\section{CONCLUSION} 
In this paper, we have studied the problem of decentralized caching in Fog radio Access Networks. For two edge-nodes and arbitrary number of users, a coded delivery scheme has been developed to minimize the latency for delivering user demands. We have shown that even with the decentralized placement, the proposed delivery scheme achieves a significant performance as compared with the derived lower bound. 

\appendices
\section{PROOF OF THEOREM~\ref{Th1}}\label{App1}

In Subsection~\ref{Coded}, we proposed two delivery schemes that have the same first fourth stages of transmission and differ only at the last stage. In order to obtain the fronthaul-NDT and edge-NDT presented in~\eqref{eqn3} for each scheme, we sum the over all fronthaul-NDT and edge-NDT of the five stages: 
\begin{equation}
\begin{aligned}
\delta_F^{\left(a\right)}&=\sum_{i=1}^{4}\delta_F^{\left(i\right)}+\delta_{F}^{\left(a,5\right)}\\
\delta_E^{\left(a\right)}&=\sum_{i=1}^{4}\delta_E^{\left(i\right)}+\delta_{E}^{\left(a,5\right)}
\end{aligned}
\end{equation}
The same analysis holds for the second delivery scheme $\left(b\right)$. For serial transmission, the NDT of the system is the sum of the fronthaul-NDT and edge-NDT (see~\eqref{ser}), so that ENs choose the delivery scheme that minimizes the summation of the fronthaul-NDT and edge-NDT. It is obvious that the first scheme $\left(a\right)$ has lower NDT than the second scheme $\left(b\right)$ as $r<K_r$. Thus, the NDT for serial transmission is given by~\eqref{eqn6}.

For pipelined transmission, the NDT of the system is the maximum between fronthaul-NDT and edge-NDT (see~\eqref{pipe}). Therefore, we first obtain the performance of each delivery scheme by taking the maximum between fronthaul-NDT and edge-NDT, then, we choose the delivery scheme that has the minimum NDT between them as in~\eqref{eqn7}.
 
\section{PROOF OF THEOREM~\ref{Th2}}\label{App2}
The lower bound is similar to the one presented in Proposition $1$ in~\cite{tandon2016cloud} which is based on the cut-set argument. Consider the user demands $\mathbf{d}$, where each user $k\in\left[K_r\right]$ requests a different file $W_{d_k}$. To obtain the constraint~\ref{cons1}, let us consider $s\in\left\{0,\cdots,\min\lbrace K_t,K_r\rbrace\right\}$ users. From the the received signals of $s$ users which are functions of the transmitted messages $\mathbf{X}_{\left[1:K_t\right]}$, the cache contents of $\left(K_t-s\right)$ ENs, and the fronthaul messages received by the $\left(K_t-s\right)$ ENs, we can reconstruct the transmitted messages $\mathbf{X}_{\left[1:K_t\right]}$ at high SNR regime by solving $K_t$ equations in $K_t$ unknowns. Therefore, we can argue that the $K_r$ user demands can be decodable by observing the received signals of $s$ users $\mathbf{Y}_{\left[1:s\right]}$, the cache contents of $\left(K_t-s\right)$ ENs $V_{\left[1:K_r-s\right]}$, the fronthaul messages of $\left(K_t-s\right)$ ENs $\mathbf{U}_{\left[1:K_t-s\right]}$, and the cache contents of $K_r$ users $Z_{\left[1:K_r\right]}$. \vspace{-2mm}

\small
\begin{IEEEeqnarray}{l}~\label{Lower1}
K_rF=H\left(W_{\left[1:K_r\right]}\right)\stackrel{\left(a\right)}{=}H\left(W_{\left[1:K_r\right]}|W_{\left[K_r+1:N\right]}\right)\nonumber\\
\stackrel{\left(b\right)}{=}I\left(W_{\left[1:K_r\right]};\mathbf{Y}_{\left[1:s\right]},Z_{\left[1:K_r\right]},V_{\left[1:K_t-s\right]},\mathbf{U}_{\left[1:K_t-s\right]}|W_{\left[K_r+1:N\right]}\right)\nonumber\\
+H\left(W_{\left[1:K_r\right]}|W_{\left[K_r+1:N\right]},\mathbf{Y}_{\left[1:s\right]},Z_{\left[1:K_r\right]},V_{\left[1:K_t-s\right]},\mathbf{U}_{\left[1:K_t-s\right]}\right)\nonumber\\
\stackrel{\left(c\right)}{\leq} I\left(W_{\left[1:K_r\right]};\mathbf{Y}_{\left[1:s\right]}|W_{\left[K_r+1:N\right]}\right)+\nonumber\\
I\left(W_{\left[1:K_r\right]};Z_{\left[1:K_r\right]},V_{\left[1:K_t-s\right]},\mathbf{U}_{\left[1:K_t-s\right]}|W_{\left[K_r+1:N\right]},\mathbf{Y}_{\left[1:s\right]}\right)+\varepsilon_F F\nonumber\\
\stackrel{\left(d\right)}{\leq} I\left(\mathbf{X}_{\left[1:K_t\right]};\mathbf{Y}_{\left[1:s\right]}\right)+I\left(W_{\left[1:K_r\right]};Z_{\left[1:K_r\right]}|W_{\left[K_r+1:N\right]},\mathbf{Y}_{\left[1:s\right]}\right)+\nonumber\\
I\left(W_{\left[1:K_r\right]};V_{\left[1:K_t-s\right]},\mathbf{U}_{\left[1:K_t-s\right]}|W_{\left[K_r+1:N\right]},\mathbf{Y}_{\left[1:s\right]},Z_{\left[1:K_r\right]}\right)+\varepsilon_F F\nonumber\\
\stackrel{\left(e\right)}{\leq} sT_E\log\left(P\right)+H\left(Z_{\left[1:K_r\right]}|W_{\left[K_r+1:N\right]},\mathbf{Y}_{\left[1:s\right]}\right)+\nonumber\\
I\left(W_{\left[1:K_r\right]};V_{\left[1:K_t-s\right]},\mathbf{U}_{\left[1:K_t-s\right]}|W_{\left[K_r+1:N\right]},\mathbf{Y}_{\left[1:s\right]},Z_{\left[1:K_r\right]}\right)+\varepsilon_F F\nonumber\\
=sT_E\log\left(P\right)+H\left(Z_{\left[1:K_r\right]}|W_{\left[K_r+1:N\right]},\mathbf{Y}_{\left[1:s\right]}\right)\nonumber\\
\quad+I\left(W_{\left[1:K_r\right]};V_{\left[1:K_t-s\right]}|W_{\left[K_r+1:N\right]},\mathbf{Y}_{\left[1:s\right]},Z_{\left[1:K_r\right]}\right)+\nonumber\\
 I\left(W_{\left[1:K_r\right]};\mathbf{U}_{\left[1:K_t-s\right]}|W_{\left[K_r+1:N\right]},\mathbf{Y}_{\left[1:s\right]},Z_{\left[1:K_r\right]},V_{\left[1:K_t-s\right]}\right)+\varepsilon_F F\nonumber\\
=sT_E\log\left(P\right)+H\left(Z_{\left[1:K_r\right]}|W_{\left[K_r+1:N\right]},\mathbf{Y}_{\left[1:s\right]}\right)\nonumber\\
\quad+H\left(V_{\left[1:K_t-s\right]}|W_{\left[K_r+1:N\right]},\mathbf{Y}_{\left[1:s\right]},Z_{\left[1:K_r\right]}\right)\nonumber\\
\quad +H\left(\mathbf{U}_{\left[1:K_t-s\right]}|W_{\left[K_r+1:N\right]},\mathbf{Y}_{\left[1:s\right]},Z_{\left[1:K_r\right]},V_{\left[1:K_t-s\right]}\right)+\varepsilon_F F\nonumber\\
\stackrel{\left(f\right)}{\leq}sT_E\log\left(P\right)+H\left(Z_{\left[1:K_r\right]}|W_{\left[K_r+1:N\right]},\mathbf{Y}_{\left[1:s\right]}\right)+\nonumber\\
\qquad \left(K_t-s\right)\left(K_r-s\right)\mu_tF +H\left(\mathbf{U}_{\left[1:K_t-s\right]}\right)+\varepsilon_F F\nonumber\\
\stackrel{\left(g\right)}{\leq}sT_E\log\left(P\right)+H\left(Z_{\left[1:K_r\right]}|W_{\left[K_r+1:N\right]},\mathbf{Y}_{\left[1:s\right]}\right)+\nonumber\\
\qquad \left(K_t-s\right)\left(K_r-s\right)\mu_tF +\left(K_t-s\right)T_Fr\log\left(P\right)+\varepsilon_F F
\end{IEEEeqnarray}

\normalsize

where $\left(a\right)$ follows from the fact that the files are independent. Step $\left(b\right)$ follows from the chain rule. Step $\left(c\right)$ follows from Fano's inequality, where \vspace{-2mm}

\small
\begin{equation*}
H\left(W_{\left[1:K_r\right]}|W_{\left[K_r+1:N\right]},\mathbf{Y}_{\left[1:s\right]},\mathbf{U}_{\left[1:K_t-s\right]},V_{\left[1:K_t-s\right]},Z_{\left[1:K_r\right]}\right)\leq \varepsilon_F F
\end{equation*}
\normalsize 
since the users are able to decode their requested files from their cache contents and the transmitted messages of ENs. $\varepsilon_F$ is a function of $F$ that vanishes as $F\rightarrow \infty$. Step $\left(d\right)$ follows from the data processing inequality, where $\mathbf{X}_{\left[1:K_t\right]}$ are functions of $W_{\left[1:K_r\right]}$, and the fact that the transmitted and received signals are independent on the unrequested files. Step $\left(e\right)$ follows from the capacity bound of $K_t\times s$ broadcast channel over $T_E$ time slots in the high SNR regime, where $s\leq K_t$, in addition to $H\left(Z_{\left[1:K_r\right],\mathbf{Y}_{\left[1:s\right]}}\mid W_{\left[1:N\right]}\right)=0$. Step $\left(f\right)$ follows from the fact that conditioning reduces entropy, and the fact that caches of $\left(K_t-s\right)$ ENs have $\left(K_t-s\right)\left(K_r-s\right)\mu_t F$ bits of $\left(K_r-s\right)$ files. Step $\left(g\right)$ follows from the capacity bound on $\left(K_t-s\right)$ fronthaul links over $T_F$ times slots. The second term in~\ref{Lower1} can be upper bounded as follows
\begin{equation}~\label{Lower2}
\begin{aligned}
&H\left(Z_{\left[1:K_r\right]}|W_{\left[K_r+1:N\right]},\mathbf{Y}_{\left[1:s\right]}\right)\stackrel{\left(a\right)}{=}H\left(Z_{\left[1:s\right]}|W_{\left[K_r+1:N\right]},\mathbf{Y}_{\left[1:s\right]}\right)\\
&+H\left(Z_{\left[s+1:K_r\right]}|W_{\left[K_r+1:N\right]},\mathbf{Y}_{\left[1:s\right]},Z_{\left[1:s\right]}\right)\\
&\stackrel{\left(b\right)}{\leq} H\left(Z_{\left[1:s\right]}|W_{\left[K_r+1:N\right]}\right)+H\left(Z_{\left[s+1:K_r\right]}|W_{\left[1:s\right]\cup\left[K_r+1:N\right]}\right)\\
&\stackrel{\left(c\right)}{=}sK_r\mu_rF+\left(K_r-s\right)^2\mu_rF
\end{aligned}
\end{equation}
where $\left(a\right)$ follows from the chain rule. Step $\left(b\right)$ follows from that conditioning reduces entropy, and $s$ users are able to decode files $W_{\left[1:s\right]}$ from their cache contents and the received signals. Step $\left(c\right)$ follows from that caches of $s$ users have $K_r\mu_rF$ bits of $K_r$ files, and caches of $\left(K_r-s\right)$ users have $\left(K_r-s\right)\mu_rF$ bits of $\left(K_r-s\right)$ files. Substituting from~\ref{Lower2} into~\ref{Lower1}, we get
\begin{equation}
\begin{aligned}
&\left(sT_E+r\left(K_t-s\right)T_F\right)\log\left(P\right)\geq K_r\left(1-s\mu_r\right)F\\
&-\left(K_r-s\right)\Big[\left(K_r-s\right)\mu_r+\left(K_t-s\right)\mu_t\Big]F+\varepsilon_F F
\end{aligned}
\end{equation}
By taking the limit $P\rightarrow\infty$ and the limit $F\rightarrow\infty$, we reach to the inequality~\ref{cons1} as
\begin{equation}
\begin{aligned}
s\delta_E+r&\left(K_t-s\right)\delta_F\geq K_r\left(1-s\mu_r\right)\\
&-\left(K_r-s\right)\left[\left(K_r-s\right)\mu_r+\left(K_t-s\right)\mu_t\right]
\end{aligned}
\end{equation}

To obtain the constraint $\delta_E\geq\left(1-\mu_r\right)$, consider the cut-set at the first user
\begin{equation}
\begin{aligned}
F&=H\left(W_1\right)=H\left(W_1|W_{\left[2:N\right]}\right)\\
&\stackrel{\left(a\right)}{=}I\left(W_1;Y_1,Z_1|W_{\left[2:N\right]}\right)+H\left(W_1|W_{\left[2:N\right]},Y_1,Z_1\right)\\
&\stackrel{\left(b\right)}{\leq}I\left(W_1;Y_1|W_{\left[2:N\right]}\right)+I\left(W_1;Z_1|W_{\left[2:N\right]},Y_1\right)+\varepsilon_F F\\
&\stackrel{\left(c\right)}{\leq} T_E \log\left(P\right)+H\left(Z_1|W_{\left[2:N\right]},Y_1\right)+\varepsilon_F F\\
&\stackrel{\left(d\right)}{\leq} T_E \log\left(P\right)+\mu_r+\varepsilon_F F\\
\end{aligned}
\end{equation}
where $\left(a\right)$ follows from the chain rule. Step $\left(b\right)$ follows from the Fano's inequality. Step $\left(c\right)$ follows from the capacity bound of the broadcast channel. Step $\left(d\right)$ follows from the fact that the cache of each user has $\mu_rF$ bits of each file. By taking limits $P\rightarrow\infty$, $L\rightarrow\infty$, we get $\delta_E\geq\left(1-\mu_r\right)$. This complete the proof of lower bound on the NDT for serial transmission.

Note that fronthaul messages $\mathbf{U}_{\left[K_t-s\right]}$, ENs' messages $\mathbf{X}_{\left[1:K_t\right]}$, and received messages $\mathbf{Y}_{\left[1:s\right]}$ cannot exceed $T_p$ time slots for pipelined transmission, where $T_p$ denotes end-to-end latency in case of pipelined transmission. Thus, the lower bound on the NDT for pipelined transmission can be obtained in a similar proof for serial transmission by substituting $T_E=T_F=T_p$ in~\eqref{Lower1}. By taking limits $P\rightarrow\infty$, $L\rightarrow\infty$ and combing with the fact that $\delta_P^{*}\geq\left(1-\mu_r\right)$, we obtain the lower bound presented in~\eqref{Lowerp}.

\section{PROOF OF THEOREM~\ref{Th3}}\label{App3}
For pipelined transmission, we have three inequalities on the minimum NDT $\left(\delta_P^{*}\right)$ from Theorem~\ref{Th2} by substituting $K_t=2$ and $\mu_r=0$ in~\eqref{Lowerp}, as follows
\begin{align}
&\delta_P^{*}\geq \frac{K_r\left(1-2\mu_t\right)}{2r}~\label{in1}\\
&\delta_P^{*}\geq \frac{K_r\left(1-\mu_t\right)+\mu_t}{1+r}~\label{in2}\\
&\delta_P^{*}\geq \frac{K_r}{2}~\label{in3}
\end{align}
We consider two main regimes of fronthaul gain $r$:

1) For $r\geq \left(1-\mu_t^2\right)$: It can be easily seen that the decentralized scheme in~\eqref{eqn4} achieves the lower bound from the inequality~\eqref{in3}.

2) For $0<r<\left(1-\mu_t^2\right)$ and $\mu_t\in\left[0,1/2\right)$: Using inequality~\eqref{in1}, we have
\begin{equation}
\frac{\delta_P^{dec}}{\delta_P^{*}}=\frac{\frac{K_r}{2r}\left(1-\mu_t\right)^2}{\frac{K_r}{2r}\left(1-2\mu_t\right)}=\frac{\left(1-\mu_t\right)^2}{\left(1-2\mu_t\right)}\leq 1
\end{equation} 
for $0<r<r_2$ and $\mu_t\in\left[0,1/2\right)$. Similarly, we have
\begin{equation}
\frac{\delta_P^{dec}}{\delta_P^{*}}=\frac{\frac{K_r}{2r}\left(1-\mu_t^2\right)}{\frac{K_r}{2r}\left(1-2\mu_t\right)}=\frac{\left(1-\mu_t^2\right)}{\left(1-2\mu_t\right)}\leq 1
\end{equation}  
for $r_2<r<\left(1-\mu_t^2\right)$ and $\mu_t\in\left[0,1/2\right)$. Combining with the fact that $\frac{\delta_P^{dec}}{\delta_P^{*}}\geq 1$, this proves the optimality of the decentralized scheme for pipelined transmission when $\mu_t\in\left[0,1\right]$, $r\geq \left(1-\mu_t^2\right)$ and $\mu_t\in\left[0,1/2\right)$, $0<r$.

From Theorem~\ref{Th2} by substituting with $\mu_r=0$ and $K_t=2$ in~\eqref{cons1}, we have $s\in\lbrace0,1,2\rbrace$ inequality constraints for serial transmission as follows
\begin{align}
&\delta_F\geq \frac{K_r}{2r}\left(1-2\mu_t\right)~\label{inq1}\\
&\delta_E+r\delta_F\geq K_r\left(1-\mu_t\right)+\mu_t~\label{inq2}\\
&\delta_E\geq \frac{K_r}{2}~\label{inq3}
\end{align}
We consider three different regimes of fronthaul gain $r$:

1) For $r\geq K_r$:  From the inequality~\eqref{inq3}, we obtain the lower bound on the minimum NDT for serial transmission
\begin{equation}~\label{reg1_2}
\delta_S^{*}\geq \frac{K_r}{2}
\end{equation}
From~\eqref{reg1_2} and~\eqref{eqn5}, we have
\begin{equation}
\frac{\delta_S^{dec}}{\delta_S^{*}}\leq \frac{\left(1-\mu_t^2\right)}{r}+1 \leq 2
\end{equation}
since $\left(1-\mu_t^2\right)\leq 1$ and $r\geq 1$, we have $\frac{\left(1-\mu_t^2\right)}{r}\leq 1$.

2) For $1\leq r < K_r$: From~\eqref{reg1_2} and~\eqref{eqn5}, we have
\begin{equation}
\frac{\delta_S^{dec}}{\delta_S^{*}}\leq \frac{\left(1-\mu_t\right)^2}{r}+1+\frac{2\mu_t\left(1-\mu_t\right)}{K_r} \leq 3
\end{equation}
where $\frac{\left(1-\mu_t\right)^2}{r}\leq 1$ as $\left(1-\mu_t\right)^2\leq 1$ and $r\geq 1$. In addition, $\frac{2\mu_t\left(1-\mu_t\right)}{K_r}\leq 1$, since $K_r\geq 2$ and $\mu_t\left(1-\mu_t\right)\leq 1$.

3) For $0<r<1$, $\mu_t\in\left[0:\sqrt{2}-1\right]$: From the inequalities~\ref{inq1},~\ref{inq2}, we have
\begin{equation}~\label{reg2}
\delta_S^{*}\geq \frac{K_r}{2r}\left(1-2\mu_t\right)+\frac{K_r}{2}+\mu_t
\end{equation}
Using~\eqref{eqn5} and the lower bound~\eqref{reg2}, we obtain
\begin{equation}
\begin{aligned}
\frac{\delta_S^{dec}}{\delta_S^{*}}&\leq \frac{\frac{K_r}{2}\left[\frac{\left(1-\mu_t\right)^2}{r}+1\right]+\mu_t\left(1-\mu_t\right)}{\frac{K_r}{2r}\left(1-2\mu_t\right)+\frac{K_r}{2}+\mu_t}\\
&=1+\frac{\mu_t^2\frac{K_r}{2r}-\mu_t^2}{\frac{K_r}{2r}\left(1-2\mu_t\right)+\frac{K_r}{2}+\mu_t}\leq 2
\end{aligned}
\end{equation}
where $\frac{\mu_t^2\frac{K_r}{2r}-\mu_t^2}{\frac{K_r}{2r}\left(1-2\mu_t\right)+\frac{K_r}{2}+\mu_t}\leq 1$ for $\mu_t\in \left[0:\sqrt{2}-1\right]$. This completes the proof of Theorem~\ref{Th3}.


\end{document}